\newcommand{\T}{{\mathcal T}}
\renewcommand{\l}{{\ell}}
\newcommand{\close}[2]{close_{#2}({#1})}
\newcommand{\profit}[2]{profit_{#2}(#1)}
\newcommand{\sign}[1]{sgn({#1})}
\newcommand{\ml}[1]{#1}
\title{How Brokers can Optimally Abuse Traders}
\titlerunning{How Brokers can Optimally Abuse Traders} 
\author{Manuel Lafond}{Université de Sherbrooke, Canada}{manuel.lafond@USherbrooke.ca}{https://orcid.org/
0000-0002-5305-7372}{}
\authorrunning{Manuel Lafond}
\keywords{Algorithms, trading, graph theory}
\begin{document}


\maketitle

\begin{abstract}
Traders buy and sell financial instruments in hopes of making profit, 
and brokers are responsible for the transaction.
There are several hypotheses and conspiracy theories arguing that in some situations, brokers want their traders to lose money.
For instance, a broker may want to protect the positions of a privileged customer.   
Another example is that some brokers take positions opposite to their traders', in which case they make money whenever their traders lose money.
These are reasons for which brokers might  manipulate prices in order to maximize the losses of their traders.

In this paper, our goal is to perform this shady task optimally --- or at least to check whether this can actually be done algorithmically.  Assuming total control over the price of an asset (ignoring the usual
aspects of finance such as market conditions, external influence or stochasticity),
we show how in quadratic time, given a set of trades specified by a stop-loss and a take-profit price, 
a broker can find a maximum loss price movement.
We also look at an online trade model where broker and trader exchange turns, each
trying to make a profit.  We show in which condition either side can make a profit, and that the best option for the trader is to never trade. 
\end{abstract}

\section{Introduction}

Trading is the practice of buying or selling financial assets with the aim of
making a profit. 
A trader can buy an instrument at some price $p$ and sell it at price $p'$, 
making a profit of $p' - p$ (which might be negative).  The trader can also sell an unowned instrument at some price $p$ with the obligation to buy it back someday, say at a time where  
the new price is $p'$, making a profit of $p - p'$ (this is called \emph{shorting}).  
A \emph{broker} is usually responsible for the execution of a trade, taking care of the technical aspects of the transaction.

This power over trade execution has given rise to several conspiracy theories.  This has been especially prevalent in the year 2021 where unprecedented financial events occurred.  One of them was the rise and fall of the GameStop (GME) stock that started in January.  In short, large institutions had significantly shorted the stock and, in a concerted counteract effort, retail traders massively bought it and made its price go up 30-fold.  At this point, several traders were unable to buy more GME stocks, the transaction being blocked by their broker (see~\cite{robinhood2021}).  This may have been caused by technical issues, but of course the Internet accused brokers of manipulating prices to protect the short positions of their large clients.  The stock went back near its original price, allowing the shorts to limit their losses, and then stabilized a bit higher in the months after.  
The year 2021 has also seen cryptocurrencies gain monstrous gains in value.  They are under less regulations than stocks and have been suspected of shady price manipulation techniques, for instance pumps and dumps, since their rise in popularity~\cite{pumpdump2021}.
As a final example, the foreign exchange currency market is largely managed by brokers called \emph{market-makers}.  These place  trades in the opposite direction of their traders --- if a trader wants to buy an asset, the market-maker will sell it, and if the trader
wants to sell it, they will buy it.  After all, there has to be two parties involved 
in a transaction, and market-makers assume one of the roles.    This gives them an incentive for their clients to perform poorly.  See~\cite{marketmakers2021} for a gentle introduction on market makers and~\cite{yang2011fx} for a thesis on the topic.
Price manipulation theories are based on the arguments that large brokerage firms have access to enough funds to control prices to some extent, along with statistics showing that a majority of traders lose money (see~\cite{losemoney2015}).

Now, taking such accusations seriously is likely to make economists jump out of their seats,
since there are too many factors driving asset prices for a single entity to control it.
But here, we rather take an algorithmic perspective on these theories.
To take it to the extreme, suppose that brokers have total control over the prices.
Armed with the knowledge of every trade that is currently open, 
their goal is to make people lose as much money as possible.  The question is: even with the ultimate power 
of price manipulation, \emph{can they}?  Is this optimization problem easy?
In this paper, we wear the hat of (would-be) mischievous brokers and 
devise price manipulation algorithms to do our evil bidding.
Note that this question has another interpretation: as a trader, what is the worst that could happen
with a set of opened trades\footnote{\ml{This suggests that traders could open trades in opposing directions.  This appears to make little sense, but it has its advantages --- in fact, this is a well-known risk management technique called \emph{hedging}}.}? 
There seems to be no algorithmic answer in the literature for this simple question.
Algorithms exist for the seemingly related notions of \emph{value at risk}~\cite{larsen2002algorithms} and \emph{maximum loss}~\cite{studer1997maximum},
but these measures are based on stochastic prices, operate on multiple assets and are 
subject to various market conditions, unlike here where we assume full control. 

\textbf{Our contributions.}  We model trades as bounded by two closing prices - a winning and a losing price.
This is typical in trading: traders want to limit their risk and often set up a \emph{stop-loss}, 
a price at which the trade closes automatically when too much losses are incurred. This is usually accompanied with a 
\emph{take-profit} price, which closes the trade when its profit is high enough.  Brokers can use 
these two pieces of information to their advantage, leading to \ml{several problems.
First, we study the \emph{offline} problem where, given a set of trades, we ask for a price movement that maximizes trader losses}.
We show that this problem reduces to finding a maximum independent set on the \emph{trade conflict graph}, which exhibits which trades cannot be won simultaneously.  This graph is bipartite and thus our problem 
can be solved in polynomial time using maximum flow techniques \cite{orlin2013max, stoer1997simple}.
By digging a bit deeper, we develop a specialized $O(n^2)$ time algorithm by characterizing trade conflict graphs geometrically.  That is, we define an equivalent graph class called \emph{bicolored plane domination graphs}, which are shown to be chordal bipartite (see~\cite{brandstadt1999graph}), and on which dynamic programming can be performed for our purposes.  This class is interesting in itself and leads to several open algorithmic and graph theoretical problems.

\ml{Second,} we look at the \emph{online} setting.  That is, the trader can add a new trade or close a trade
at any given time, and the broker still has to maximize losses.  This leads to a two-player
game where the trader and broker exchange turns.  We show that there is essentially one strategy that the broker must use, which is to always move the price greedily in the direction of maximum \emph{potential} profit.  In particular, using the independent set algorithm from above in the online setting incurs infinite losses against an optimal trader.  We conclude by showing that if the broker uses this strategy, the trader should simply not open any trade and get rich through other means.

\section{Preliminary notions}\label{sec:det-trades}

We use the notation $[a] = \{1,2, \ldots, a\}$ and $[a, b] = \{x \in \mathbb{Z} : a \leq x \leq b\}$.  Given an integer $x \in \mathbb{Z}$, the sign of $x$ is denoted $\sign{x}$, which is either $1$ or $-1$.  All graphs in this paper are finite and simple.  For a graph $G$ and $X \subseteq V(G)$, $G[X]$ denotes the subgraph induced by $X$.  A \emph{weighted graph} is a pair $(G, h)$ where $G$ is a graph and $h : V(G) \rightarrow \mathbb{R}$ assigns a weight to each vertex.  For $X \subseteq V(G)$, we write $h(X)$ for the sum of weights of vertices in $X$.
Some of our routines will require to sort integers.  We will write \emph{integer sorting time} to refer to the time required to sort a list of $n$ arbitrary integers, when $n$ is clear from the context (this can range from $O(n \log \log n )$ to $O(n)$ depending on conditions that we prefer to leave out of consideration~\cite{han2002deterministic, franceschini2007radix}).

Regarding the financial notions used in this paper, a \emph{price} is simply an integer, possibly negative.
Using integral prices is justified by the fact that in trading, prices are usually handled to the fourth of fifth decimal and may easily be treated  as integers.
All trades are done on a single asset, and for our purposes no fee is required to open a trade \ml{(although the fees could be embedded in the profit functions described below)}.
We assume that the current price of the asset is $0$.

A \emph{trade} $T = (w, \l, f)$ consists of three parameters:
$w$ is the winning price \ml{from the broker's point of view}, $\l$ is the losing price \ml{from the broker's point of view}, and $f : \{w, \l\} \rightarrow \mathbb{R}$ is the profit made by the broker if the price is attained before the other (see Figure~\ref{fig:main-example} and the description of price movements below). 
Note that profits can be negative.
We require that either $\l < 0 < w$, in which case $T$ is an $up$ trade, or $w < 0 < \l$, in which case $T$ is a $down$ trade.  
We will assume that $f(w) \geq f(\l)$ (otherwise, we can flip the roles of $w$ and $\l$).

Note that in a typical real-life trade setting, we have $f(w) > 0$ since $w$ represents a profit point and $f(\l) < 0$ since $\l$ it represents a loss point.
Moreover, profits and losses are usually linearly related, i.e. there is some $\delta$ such that $f(w) = f(\l) + \delta \cdot |w - \l|$ (as is the case in Figure~\ref{fig:main-example}).
However, such restrictions will not be needed for our algorithms until we discuss online trades, and we prefer to keep $f$ generic.
\ml{As another side note, in our terminology, an $up$ trade corresponds to a $sell$ from the trader since the mischievous broker makes profit when the trader loses, and likewise a $down$ trade is a $buy$ from the trader. Since we take the broker's viewpoint in this paper, trades are described with the broker's preferred direction.}

\begin{figure}[t]
  \begin{center}
    \includegraphics[width=\textwidth]{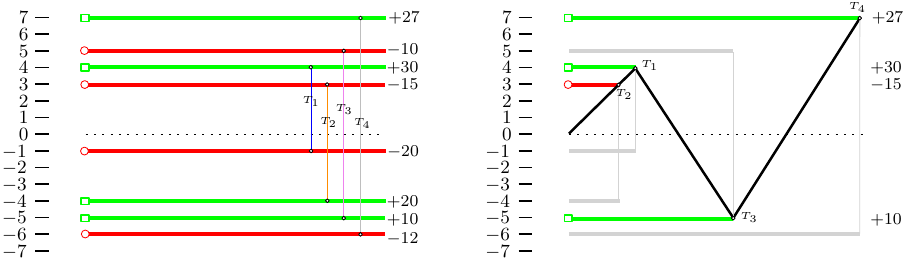}
  \end{center}
  \caption{Left: an example input for the Maximum Trader Abuse problem with $|\T| = \{T_1, T_2, T_3, T_4\}$.
  Each trade has a winning price (green lines that start with a square) 
  and a losing price (red lines that start with a circle).  Each winning and losing price has a corresponding profit indicated on the right.
  For instance, $T_1 = (4, -1, f)$ where $f(4) = 30$ and $f(-1) = -20$.
    (b) An optimal price movement $M$ for $\T$ that makes a total profit of $30 - 15 + 10 + 27 = 52$.  The green/red (square/circle) lines now indicate the lifespan of each trade according to $M$.  
    The profit points not realized by $M$ are grayed-out.  Only $T_2$ is lost since price $3$ is reached before $-4$.
     }
  \label{fig:main-example}
\end{figure}

A \emph{price movement}
$M = (m_1, \ldots, m_k)$ is a sequence
where $m_i \in \{+1, -1\}$ for each $i \in [k]$.
The current price after the $j$-th movement is $\sum_{i = 1}^{j}m_i$. 
Consider a trade $T = (w, \l, f)$ and a price movement $M$.  If, under $M$, price $w$ is reached before $\l$, then $T$ is \emph{won} and a profit of $f(w)$ is made, and if instead price $\l$ is reached before $w$, then $T$ is \emph{lost} and a profit of $f(\l)$ is incurred.  
The first price in $\{w, \l\}$ reached by $M$ is denoted $\close{T}{M}$ and is called the \emph{closing price} of $T$ under $M$.  If $M$ does not reach either $w$ or $\l$, then $\close{T}{M}$ is undefined.
If it is defined, then the broker's profit made from $T$ under price movement $M$ can be written as $f(\close{T}{M})$.  
Given a set of trades $\T$, we say that price movement $M$ is \emph{valid for $\T$} if $\close{T}{M}$ is defined for every $T \in \T$.
If $M$ is valid, the \emph{total profit} of $M$ for  $\T$ is 
\[
\profit{M}{\T} = \sum_{T = (w, \l, f) \in \T} f(\close{T}{M})
\]

The problem statement follows:

\vspace{3mm}

\noindent The \textbf{Maximum Trader Abuse} problem:\\
\noindent {\bf Given:} a set of trades $\T$; \\
\noindent {\bf Find:} a valid price movement $M$ that maximizes $\profit{M}{\T}$.

\vspace{3mm}

We will denote $n = |\T|$ unless stated otherwise.
We will say that $\T' \subseteq \T$ is an \emph{optimal set of trades} if there exists a price movement $M$ that maximizes $\profit{M}{\T}$ such that the set of trades won by $M$ is precisely $\T'$. \ml{This definition implies that}
the problem is equivalent to finding an optimal set of trades.

\ml{Observe that we could have removed the validity requirement on $M$ in the problem definition.  This would allow the broker to leave some trades open forever by never reaching one of their closing prices.  This appears to be an important difference, and we leave it as an open question whether our techniques can handle this variant.}
On another note, let us briefly discuss the size of $M$.
Consider a trade $T = (w, \l, f) \in \T$, and notice that $w$ or $ \l$ are not necessarily polynomial in $n$.
Since $M$ has to close $T$, representing $M$ as a sequence of unit movements might yield an exponential-size output,
which is not necessary.
Instead, $M$ can be described by the sequence of prices in which it changes direction, from which the +1/-1 sequence can easily be inferred.  
That is, we may represent the movement as a sequence of prices $M = (p_1, p_2, \ldots, p_k)$, where $p_1 = 0$ is the initial price.  Then for each $i \in \{0, \ldots, k-1\}$, $|p_{i+1} - p_i|$ steps must be performed in the same direction (up if $p_{i+1} > p_i$, and down otherwise).  In fact, each $p_i$ can be assumed to be one of the values in $\bigcup_{(w, \l, f) \in \T} \{w, \l\}$ since only these determine profits, which guarantees that the output can be made of linear size.

\section{Maximizing trader abuse in quadratic time}

We first show that finding an optimal $M$ reduces to finding a maximum weight independent set in a bipartite graph, which we call the \emph{trade conflict graph}, that describes which pairs of trades cannot be won together.  
The latter problem can be solved in polynomial time using standard maximum flow techniques.
We then study the trade conflict graph a bit more in-depth and show that its properties lead to a $O(n^2)$ time algorithm.  This is achieved by representing the trade as a set of green and red points on the 2D plane where conflicts correspond to bicolored domination relationships.

We say that a set of trades $\T$ is \emph{compatible} if there exists a price movement
$M$ such that for all $(w, \l, f) \in \T$, we have $\close{T}{M} = w$.  Otherwise $\T$ is \emph{incompatible}.  Two trades $T_1, T_2$ are compatible if the set $\{T_1, T_2\}$ is compatible (and otherwise incompatible).
Pairs of incompatible trades are easy to characterize.

\begin{lemma}\label{lem:pair-compatible}
Let $T_1 = (w_1, \l_1, f_1)$ and $T_2 = (w_2, \l_2, f_2)$ be two trades.  Then $T_1$ and $T_2$ are incompatible if and only if 
$\sign{w_1} \neq \sign{w_2}$, $|\l_1| \leq |w_2|$ and $|\l_2| \leq |w_1|$.
\end{lemma}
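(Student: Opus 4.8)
The plan is to reduce the statement to a discrete intermediate value argument on the unit-step movement $M$. First I would record a normalization: by the placement constraints, $\lose{T}<0<\win{T}$ whenever $T$ is a buy and $\win{T}<0<\lose{T}$ whenever $T$ is a sell, and (since wins are positive and losses negative) $\profit{T}{M}>0$ iff $\close{T}{M}=\win{T}$; thus a set is compatible exactly when some $M$ closes each of its trades at its winning price, which amounts to reaching a price on one side of $0$ strictly before reaching a price on the other side. Two things then come for free. (i) Same-sign trades are always compatible: for two buys, the movement $(0,\max\{\win{T_1},\win{T_2}\})$ goes monotonically upward, attains both winning prices, and never attains a negative value, so it wins both; symmetrically for two sells. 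Hence incompatibility forces $\sign{T_1}\neq\sign{T_2}$. (ii) I may therefore take $T_1$ to be a buy and $T_2$ a sell and write $a=\win{T_1}>0$, $b=-\lose{T_1}>0$, $c=-\win{T_2}>0$, $d=\lose{T_2}>0$; the two inequalities of the lemma then read precisely $b\le c$ and $d\le a$.

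For the ``if'' direction ($\sign{T_1}\neq\sign{T_2}$, $b\le c$ and $d\le a$ imply incompatibility) I would argue by contradiction. Suppose some $M$ wins both trades; let $t_1$ be the first time $M$ attains the price $a$ and $t_2$ the first time it attains $-c$. Winning $T_1$ forces $M$ to avoid $-b$ before $t_1$, winning $T_2$ forces $M$ to avoid $d$ before $t_2$, and $t_1\neq t_2$ because $a>0>-c$. If $t_1<t_2$, then between $t_1$ and $t_2$ the price descends by unit steps from $a$ to $-c$, and since $-c<d\le a$ it must equal $d$ at some time in $[t_1,t_2)$; then $M$ hits $\lose{T_2}$ before $t_2$ and $T_2$ closes at a loss, a contradiction. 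If $t_2<t_1$, then between $t_2$ and $t_1$ the price rises from $-c$ to $a$, and since $-c\le -b<a$ it must equal $-b$ at some time in $[t_2,t_1)$; then $T_1$ closes at a loss before $t_1$, again a contradiction.

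For the ``only if'' direction I prove the contrapositive: assuming $\sign{T_1}\neq\sign{T_2}$ and ($b>c$ or $d>a$), I exhibit an $M$ winning both (the same-sign case being settled above). If $d>a$, take $M=(0,a,-c)$: the leg $0\to a$ wins $T_1$ and, as $a<d$, leaves $T_2$ open; the leg $a\to -c$ stays within $[-c,a]$, which excludes $d$, so it reaches $\win{T_2}$ without ever hitting $\lose{T_2}$, winning $T_2$. If $b>c$, i.e.\ $-b<-c$, take $M=(0,-c,a)$ symmetrically: $0\to -c$ wins $T_2$ (it never reaches $d>0$) and leaves $T_1$ open, and $-c\to a$ stays within $[-c,a]$, which excludes $-b$, reaching $\win{T_1}$ without hitting $\lose{T_1}$. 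The main obstacle is purely the ``reaches one price before another'' bookkeeping: ensuring $t_1,t_2$ are well defined, correctly handling the boundary cases $a=d$ and $b=c$ (where the offending price is touched exactly at $t_1$ or $t_2$ rather than strictly between them), and cleanly invoking the discrete intermediate value property of $\pm 1$ walks; everything else is a short case analysis.
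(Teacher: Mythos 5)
Your proof is correct and takes essentially the same route as the paper's: same-sign pairs are won by a monotone movement, a strict inequality (your cases $b>c$ or $d>a$) allows a two-leg movement that wins the nearer closing price first, and when both inequalities hold with opposite signs the first closing price reached must be a loss. Your first-hitting-time and discrete intermediate-value formalization simply spells out the paper's terser final case, including the boundary situations $a=d$ and $b=c$.
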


\begin{proof}
First assume that $w_1$ and $w_2$ have a different sign, and that $|\l_1| \leq |w_2|$ and $|\l_2| \leq |w_1|$ hold.  
As both trades are of opposite signs, it is impossible to reach price $w_2$ without reaching $\l_1$ first (or simultaneously if $w_2 = \l_1$), and it is impossible to reach $w_1$ without reaching $\l_2$ first (or simultaneously).
Thus, one of the loss prices must be hit, making the trades incompatible.

We prove the converse direction by contraposition.  Assume that one of the three conditions does not hold.  
Suppose first that $\sign{w_1} = \sign{w_2}$.
If $w_1, w_2 > 0$, then the price movement that always goes up wins both trades, and if $w_1, w_2 < 0$, we can make the price go down.  In either case, $T_1$ and $T_2$ are compatible.
We may thus assume that $\sign{w_1} \neq \sign{w_2}$.
If $|\l_1| > |w_2|$, we first win $T_2$ by moving the price to $w_2$.  This does not reach $\l_1$ and so $T_1$ is still open at this point.  It then suffices to go from $w_2$ to $w_1$, winning both trades.
The same idea applies when $|\l_2| > |w_1|$. 
\end{proof}

For example, trades $T_1$ and $T_2$ in Figure~\ref{fig:main-example} are incompatible.

Obviously, if a set of trades $\T$ has two trades $T_1, T_2$ that are incompatible,
then they cannot both be won and trivially $\T$ is incompatible.
Luckily, the converse also holds.

\begin{lemma}\label{lem:pairwise-compat}
Let $\T$ be a set of trades in which every pair of trades is compatible.
Then $\T$ is compatible.
Moreover, one can construct in integer sorting time a price movement $M$ that wins every trade of $\T$.
\end{lemma}

\begin{proof}
We use induction on $|\T|$.  The case $|\T| = 1$ is trivial: it suffices to go to the single winning price.
Now assume that $|\T| > 1$.  Note that for any proper subset $\T'$ of $\T$, each pair of $\T'$ is compatible.  We may thus assume by induction that the statement holds for any such $\T'$.
For a price $p$,
denote 
$\T(p) = \{(w, \l, f) \in \T : p = w$ or $p = \l\}$.
Let $p^+$ be the minimum value above $0$ such that $\T(p^+) \neq \emptyset$, and let $p^-$
be the maximum value below $0$ such that $\T(p^-) \neq \emptyset$.
If all trades of $\T(p^+)$ are won at price $p^+$, we can raise the price from $0$ to $p^+$, 
win these trades, bring the price back to $0$ and apply induction on $\T \setminus \T(p^+)$ to win every other trade.  
The same applies if all trades of $\T(p^-)$ are won at $p^-$.
So suppose that there are $T_1 = (w_1, \l_1, f_1) \in \T(p^+)$ and $T_2 = (w_2, \l_2, f_2) \in \T(p^-)$ that are losing, i.e. such that $\l_1 = p^+$ and $\l_2 = p^-$.
Then $T_1$ is a $down$ trade and $T_2$ is an $up$ trade, and hence $\sign{w_1} \neq \sign{w_2}$.
Moreover, $w_2 \geq \l_1 > 0$ by our choice of $p^+$, 
and similarly $w_1 \leq \l_2 < 0$.  These imply that $|\l_1| \leq |w_2|$ and $|\l_2| \leq |w_1|$ and, by Lemma~\ref{lem:pair-compatible}, 
$T_1$ and $T_2$ are incompatible, a contradiction.  It follows that some price movement can win every trade of $\T$.

One can derive a sorting time algorithm from the above argument. First obtain a sorted list $A_{+}$ of the all the trades $(w, \l, f)$ with respect to $\max \{w, \l\}$ in ascending order, and a sorted list $A_{-}$ of all the trades with respect to $\min \{w, \l\}$ in descending order. 
The first element of $A_{+}$ has price $p^+$, and the first price in $A_{-}$ is $p^-$.  
By advancing through $A_{+}$ (resp. $A_{-}$) as long as the trade prices are $p^+$ (resp. $p^-$), we build the set $L_{+}$ (resp. $L_{-}$) of the trades that are lost at price $p^+$ (resp. $p^{-}$).
By the above argument, one of $L_{+}$ or $L_{-}$ is empty.
If, say, $L_{+} = \emptyset$, we move the price to $p^+$, win every trade at that price, and add those trades to a set $C$ of closed trades. We also remove from $L_{-}$ those trades that just closed. 
We then move forward in $A_{+}$ to find the next $p^+$ and the next $L_{+}$ and, if necessary, we advance through $A_{-}$ and find the next $L_{-}$.   Again, one of $A_{+}$ or $A_{-}$ will be empty, so we repeat.  The process is the same if $L_{-}$ is empty instead.  Note that in subsequent iterations, one must check whether a trade is in $C$ before adding it to $L_{+}$ or $L_{-}$.
The whole procedure can be done by traversing $A_{+}$ and $A_{-}$ once, and by adding/removing each trade in $L_{+}$ or $L_{-}$ at most once, and $O(n)$ price movements are added, which takes overall $O(n)$ time after sorting.
\end{proof}

Lemma~\ref{lem:pairwise-compat} essentially states that we can devote our efforts
to finding a set of pairwise-compatible trades $\T' \subseteq \T$ 
that maximizes profit, although we need to account for the losses incurred by the trades \emph{not} in $\T'$.
We can thus reformulate the problem in graph-theoretical terms.

\begin{definition}
The \emph{trade conflict graph} $G(\T) = (V, E)$ of a set of trades $\T$ is the graph with vertex
set $V = \T$, and in which there is an edge between each pair of incompatible trades.  

Furthermore, the \emph{weighing of $\T$} is the function $h : \T \rightarrow \mathbb{R}$ that assigns the weight $f(w) - f(\l)$ to each trade $(w, \l, f) \in \T$.
\end{definition}

Note that all weights of $h$ are non-negative since $f(w) \geq f(\l)$ is assumed to hold for all trades.
Also observe that by Lemma~\ref{lem:pair-compatible}, testing compatibility on a pair of trades can be done in constant
time, and thus $G(\T)$ can be built in time $O(n^2)$.  Perhaps $G(\T)$ could be constructed faster with more refined ideas, but we shall not dwell on this.
Now, as per Lemma~\ref{lem:pairwise-compat}, $\T' \subseteq \T$ is compatible if and only if
$\T'$ forms an independent set in $G(\T)$ (a set of vertices with no shared edges).  As we show below, the $h$ weighing is adjusted so that our problem reduces to finding a maximum weight independent set.

\begin{theorem}
Let $\T$ be a set of trades.
Then $\T^* \subseteq \T$ is an optimal set of winning trades if and only if 
$\T^*$ is a maximum weight independent set in $(G(\T), h)$, where $h$ is the weighing function of $\T$.
\end{theorem}

\begin{proof}
Let $M$ be any price movement that closes every trade and let $\T'$ be the set of trades won by $M$.  
Note that $\T'$ must be an independent set of $G(\T)$. Moreover, the profit incurred by $M$ is
\begin{align*}
\sum_{T \in \T} \close{T}{M} &= \sum_{(w, \l, f) \in \T'} f(w) + \sum_{(w, \l, f) \in \T \setminus \T'} f(\l) \\
&= \sum_{(w, \l, f) \in \T'} f(w) + \left(\sum_{(w, \l, f) \in \T} f(\l) - \sum_{(w, \l, f) \in \T'} f(\l) \right) \\
&= \sum_{(w, \l, f) \in \T'} (f(w) - f(\l)) + \sum_{(w, \l, f) \in \T} f(\l) \\
&= \sum_{T \in \T'} h(T) + \sum_{(w, \l, f) \in \T} f(\l)
\end{align*}
The term $t := \sum_{(w, \l, f) \in \T} f(\l)$ does not depend on the choice of $M$ and can be ignored since it does not contribute to the optimization criterion.  
Therefore, each $M$ of profit $k + t$ corresponds to an independent set $\T'$ of weight $k$, and conversely each independent set of weight $k$ corresponds to a price movement of profit at least $k + t$, by Lemma~\ref{lem:pairwise-compat}.  It follows that finding an optimal $M$ is equivalent to finding an independent set in $G(\T)$ of maximum weight with respect to $h$.
\end{proof}

It is not hard to see that $G(\T)$ is bipartite.  Indeed, letting $\T^+ = \{(w, \l, f) \in \T : w > 0\}$ and $\T^- = \{(w, \l, f) \in \T : w < 0\}$, one can see by Lemma~\ref{lem:pair-compatible} that $\{\T^+, \T^-\}$ is a partition of $\T$ into two independent sets.
Finding a maximum weight independent set in a bipartite 
graph $G$ can be done in polynomial-time using a maximum-flow reduction.
This can be implemented to run in time $O(n^3)$ using the Stoer-Wagner min-cut algorithm~\cite{stoer1997simple}
or a variety of max-flow algorithms that run in time $O(nm)$, where $m$ is the number of conflicts~\cite{orlin2013max}.

\subsection{Bicolored plane domination models for trade conflict graphs}

We now reformulate trade conflict graphs in terms of two-colored points on the 2D plane.  
A \emph{colored point} is a triple $(x, y, c)$ where $x$ and $y$ refer to horizontal and vertical plane coordinates, respectively, and $c \in \{green, red\}$ is the color of the point.  
We say that a colored point $(x, y, c)$ \emph{dominates} another colored point $(x', y', c')$ if $x \geq x'$ and $y \geq y'$.

Let $G = (V, E)$ be a graph.
We say that $G$ is a \emph{bicolored plane domination graph} if
one can assign to each $v \in V$ a colored point $p_v = (x_v, y_v, c_v)$ such that $uv \in E(G)$ if and only if 
$c_u \neq c_v$ and the green point of $\{p_u, p_v\}$ dominates the red point of $\{p_u, p_v\}$.
In words, vertices correspond to colored points, and there is an edge between two vertices if their points are green and red, and if the green is to the upper-right of the red.
If this is the case, the multiset of colored points $p_u$ is called a \emph{bicolored plane domination model} for $G$ (multiset because two vertices could be assigned points with the same coordinates and color).
As it turns out, this coincides with trade conflict graphs.  See Figure~\ref{fig:bicolored} for an illustration.

\begin{figure}[t]
    \centering
    \includegraphics[width=\textwidth]{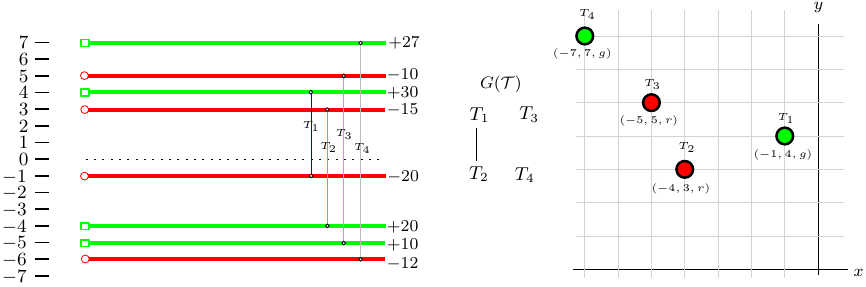}
    \caption{Left: the four trades from the previous example.  Middle: the graph $G(\T)$, which has only one incompatible pair.  Right: A bicolored plane domination model for $G(\T)$.  This is the model used in the proof of Theorem~\ref{thm:characterize}.}
    \label{fig:bicolored}
\end{figure}

\begin{theorem}\label{thm:characterize}
A graph $G$ is a trade conflict graph if and only if $G$ is a bicolored plane domination graph.

Moreover, given a set of trades $\T$, a bicolored plane domination model for $G(\T)$ can be found in time $O(n)$.
\end{theorem}

\begin{proof}
Let $\T$ be a set of trades and let $G := G(\T)$ be its trade conflict graph.  
Let $\T^+$ be the set of trades of $\T$ that are won at a positive price, and let $\T^- = \T \setminus \T^+$.
For each $T = (w, \l, f) \in \T$, assign the corresponding colored point of $T$ to
\begin{align*}
    (x_T, y_T, c_T) = \begin{cases}
       (\l, w, green) &\mbox{if $T \in \T^+$} \\
       (w, \l, red) &\mbox{if $T \in \T^-$} \\
    \end{cases}
\end{align*}
This model is illustrated in Figure~\ref{fig:bicolored}.
Now consider two distinct trades $T_1 = (w_1, \l_1, f_1)$ and $T_2 = (w_2, \l_2, f_2)$ of $\T$.  
First assume that $T_1$ and $T_2$ are incompatible.
Then by Lemma~\ref{lem:pair-compatible}, $\sign{w_1}  \neq \sign{w_2}$.  Assume without loss of generality that $T_1$ is in $\T^+$, and thus that $T_2$ is in $\T^-$.
By incompatiblity, $|\l_1| \leq |w_2|$.  Since $w_2 < 0$, it follows that $\l_1 \geq w_2$. 
Similarly, $|\l_2| \leq |w_1|$ and, since $w_1 > 0$, we have $w_1 \geq \l_2$.  Therefore, the point $(\l_1, w_1, green)$ corresponding to $T_1$ dominates the point $(w_2, \l_2, red)$
corresponding to $T_2$, and the model correctly puts an edge between $T_1$ and $T_2$.

Assume that $T_1$ and $T_2$ are compatible.  
If $\sign{w_1} = \sign{w_2}$, then the points corresponding to $T_1$ and $T_2$ are either both green and both red and the model correctly puts a non-edge between $T_1$ and $T_2$.
We may thus assume without loss of generality that $T_1 \in \T^+$ and $T_2 \in \T^-$.
By compatibility, one of $|\l_1| > |w_2|$ or $|\l_2| > |w_1|$ holds.
Suppose that $|\l_1| > |w_2|$.  Since $\l_1$ is negative, we have $\l_1 < w_2$, which means that the green point $(\l_1, w_1, green)$ does not dominate the red point $(w_2, \l_2, red)$.  So suppose that $|\l_2| > |w_1|$.  Since $\l_2$ is positive, we have $\l_2 > w_1$, and again $(\l_1, w_1, green)$ does not dominate $(w_2, \l_2, red)$.
In either case, $T_1$ and $T_2$ do not share an edge according to the model.
We deduce that $G$ is a bicolored plane domination graph using the model described above.  

Note that constructing the above set of colored points does not require constructing $G(\T)$.  It only requires looping through $\T$ and outputting each point in $O(1)$ time, which takes total time $O(n)$.  This proves the second part of the theorem statement.

Now consider the converse direction of the equivalence. Assume that $G$ is a bicolored plane domination graph.  We show that there exists a set of trades $\T$ such that $G$ is isomorphic to $G(\T)$.  
Take any model for $G$ and, for $v \in V(G)$, denote by $(x_v, y_v, c_v)$ the colored point assigned to $v$.
Notice that bicolored plane domination models are robust to translation.  That is, if we translate every point by the same vector, the domination relationships remain unchanged.  Using this, we can assume that every point is in the upper left quadrant of the plane, i.e. that every point $(x_v, y_v, c_v)$ satisfies $x_v < 0$ and $y_v > 0$ (which can be done by translating to the left and upwards by large enough amounts).  
For each $v \in V(G)$ such that $c_v = green$, add to $\T$ the trade $(y_v, x_v, f_v)$, where $f_v$ is arbitrary for our purposes.
Then for each $w \in V(G)$ such that $c_{w} = red$, add to $\T$ the trade $(x_{w}, y_{w}, f_w)$ again with arbitrary $f_w$.

We show that dominating pairs of green-red points coincide with incompatible trades.
Consider a green point $(x_v, y_v, green)$ that dominates a red point $(x_w, y_w, red)$. 
Then the corresponding trades $(y_v, x_v, f_v)$ and $(x_w, y_w, f_w)$ are incompatible because $\sign{y_v} \neq \sign{x_w}$ and $y_w \leq y_v$ by domination, which implies $|y_w| \leq |y_v|$ since the $y$'s are positive, and because $x_w \leq x_v$ by domination, which implies $|x_v| \leq |x_w|$ since the $x$'s are negative.
Next, consider two incompatible trades $T_1, T_2$ of $\T$. 
They must be of the form $T_1 = (y_v, x_v, f_v)$ and $T_2 = (x_w, y_w, f_w)$.  By our construction, the corresponding points are $(x_v, y_v, green)$ and $(x_w, y_w, red)$.  By incompatibility, $|x_v| \leq |x_w|$ and $|y_w| \leq |y_v|$.  Since all the $y$'s are positive and all the $x$'s are negative, this implies $x_w \leq x_v$ and $y_w \leq y_v$ and thus $(x_v, y_v, green)$ dominates $(x_w, y_w, red)$.
It follows that $G(\T)$ has exactly the same edges as $G$.
\end{proof}

\ml{The above equivalence arguably simplifies our trading framework, but it is not immediately convenient to design algorithms using a generic bicolored plane domination model.  We proceed to show that any such model can be transformed into an equivalent one on a discrete grid in which each column has exactly one point, and each row has exactly one point, without gaps.  This will allow us to design a relatively simple dynamic programming algorithm.}

We say that a multiset of $n$ colored points $P$ \emph{has the  permutation matrix property} if $\{x : (x, y, c) \in P\} = \{y : (x, y, c) \in P\} = [n]$.
In other words, each point of $P$ has a distinct $x$ coordinate in $[n]$, and a distinct $y$ coordinate in $[n]$.  The property is named after the fact that if an $n \times n$ matrix is filled with $0$s, but that we put at $1$ in each $(x, y)$ that occurs in $P$, then we have a permutation matrix.

\begin{lemma}\label{lem:makepermut}
Let $P$ be a multiset of $n$ colored points representing a bicolored plane domination model of some graph $G$.
Then in integer sorting time, one can construct from $P$ another bicolored plane domination model $P^*$ for $G$ such that $P^*$ has the permutation matrix property.
\end{lemma}

\begin{proof}
Suppose that $P$ does not already have the permutation matrix property.  We will assume that the points of $P$ are in the upper-right quadrant of the plane, i.e. that each $(x, y, c)$ satisfies $x > 0$ and $y > 0$.  This can be achieved through translation as we did previously.

We first argue that we can find another model $P'$  for $G$ in the same quadrant such that each $x$-coordinate is distinct and each $y$-coordinate is distinct.  First, multiply the $x$ coordinate of each point of $P$ by $n$, and notice that we obtain a model of the same graph since this does not alter domination relationships.  
Then, sort the points of $P$ in ascending lexicographic order, where $red$ comes before $green$.  That is, points are of the form $(x, y, c)$ and we sort by $x$ coordinates first, then by $y$, and then $c$ to break ties (again, note that if there are points that have the same $x$ and $y$ coordinates, those that are $red$ occur before the $green$).
Traverse the points using this order and suppose that, for some $x$, we encounter several points $(x, y_1, c_1), \ldots, (x, y_k, c_k)$ with the same $x$-coordinate, in this sorted order.
Replace them by the points $(x, y_1, c_1), (x + 1, y_2, c_2), \ldots, (x + k - 1, y_k, c_k)$.  Because we previously multiplied all coordinates by $n$, these points now all have an $x$-coordinate not shared with any other point.
Moreover, one can check that our sorting criteria ensure that domination relationships are unchanged (even for green-red pairs having the same $x$ and $y$ positions since the green is moved to the right of the red).
After traversing all the points in this fashion, all $x$ values are distinct.  Note that none of the $y$ values have changed in this process.  We can thus repeat the same idea with the $y$-coordinates, resulting in another model $P'$ for $G$ with no repeated $x$ nor $y$ values.  This requires sorting time, plus one pass through the points in time $O(n)$.

To get the permutation matrix property, it suffices to ``squish'' all the $x$'s and the $y$'s in $[n]$.
That is, sort $P'$ in ascending  $x$-coordinates and let $(x_1, y_1, c_1), \ldots, (x_n, y_n, c_n)$ be the obtained ordering (there are no ties this time).  Replace the points by \linebreak $(1, y_1, c_1), (2, y_2, c_2), \ldots, (n, y_n, c_n)$, keeping the colors the same, and note that again, the domination relationships are unchanged.  This takes integer sorting time, assuming we have to sort again.  We can repeat with the $y$'s and obtain an equivalent model with the permutation matrix property in total integer sorting time.
\end{proof}

We now proceed to finding a maximum weight independent set.  
Let $(G, h)$ be a weighted bicolored plane domination graph with a model $P$ that has the permutation matrix property.  For simplicity, we will denote the vertices of $G$ by their points, so that $V(G) = P$.
For $i, j \in [n]$, denote by 
\[
V(i, j) = \{(x, y, c) \in V(G) : 1 \leq x \leq i \mbox{ and } j \leq y \leq n\}
\]
In other words, $V(i, j)$ is the set of vertices that correspond to points located in the box with corners $(1, n)$ and $(i, j)$.
We will denote by $I(i, j)$ the maximum weight of an independent set of $G[V(i, j)]$ with respect to the $h$ weights.
The maximum weight of an independent set of $G$ is $I(n, 1)$, the value we are ultimately looking for.

Since $P$ has the permutation matrix property, for $j \in [n]$, we will denote by $(x_j, j, c_j)$ the unique point of $P$ whose $y$-coordinate is $j$.
One last notation: for $i_1, i_2, j \in [n]$, denote by $R(i_1, i_2, j) = \{(x, y, c) \in P : i_1 \leq x \leq i_2$ and $j \leq y \leq n$ and $c = red\}$.  See Figure~\ref{fig:dp-example} for an illustration.

\begin{figure}
    \centering
    \includegraphics{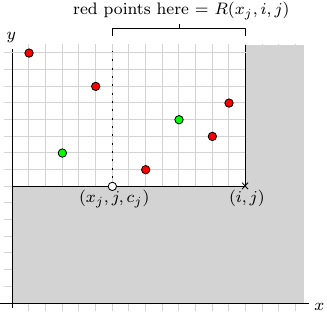}
    \caption{Illustration of the main components of the $I(i, j)$ recurrence.  Points in the grayed area are ignored and only the points of $V(i, j)$ are shown.  Here, $(x_j, i, c_j)$ is the unique point with $y$-coordinate $j$, which could be red or green.  The $R(x_j, i, j)$ set consists of the three red points in the area shown.}
    \label{fig:dp-example}
\end{figure}

For $i, j$ such that $i \notin [n]$ or $j \notin [n]$, the value of $I(i, j)$ is $0$.  
Otherwise, we can use the following recurrence:

\begin{align*}
    I(i, j) = 
    \begin{cases}
        I(i, j + 1) &\mbox{if $x_j > i$} \\
        h((x_j, j, c_j)) + I(i, j + 1) &\mbox{if $x_j \leq i$ and $c_j$ is green} \\
        \max (I(i, j + 1), 
            h(R(x_j, i, j)) + I(x_j - 1, j + 1))
        &\mbox{if $x_j \leq i$ and $c_j$ is red}
    \end{cases}
\end{align*}

\begin{lemma}
The above recurrence for $I(i, j)$ correctly represents the maximum weight of an independent set of $G[V(i, j)]$.
\end{lemma}

\begin{proof}
This can be shown inductively on $j$ in reverse order.
As a base case, notice that for $j > n$ and any $i$, then $I(i, j) = 0$ is correct since $V(i, j)$ is empty.
Assume that $i, j \in [n]$ and that for any $j' > j$ and any $i'$, the recurrence for $I(i', j')$ is correct.  
Let $W$ be an independent set of $G[V(i, j)]$ of maximum weight.
Consider the point $(x_j, j, c_j)$ of $P$.
If $x_j > i$, then point $(x_j, j, c_j)$ is not in $V(i, j)$ and $W$ does not contain it.  Thus all points of $W$ are in $V(i, j + 1)$ and $I(i, j) = I(i, j + 1) = h(W)$ follows by induction.
We may assume that $x_j \leq i$.  
Suppose that $(x_j, j, c_j)$ is green.  By the permutation matrix property, $(x_j, j, c_j)$ does not dominate any point in $V(i, j)$.  Hence, $(x_j, j, c_j)$ is an isolated vertex of $G[V(i, j)]$ and we may assume that $(x_j, j, c_j) \in W$ since all weights are positive.  All other points of $W$ must be in $V(i, j + 1)$ and by induction $h(W \setminus \{(x_j, j, c_j)\}) = I(i, j + 1)$.  If follows that $I(i, j) = h((x_j, j, c_j)) + I(i, j + 1) = h(W)$ is correct.

Finally, suppose that $(x_j, j, c_j)$ is red.  
\ml{We first show that $I(i, j) \geq h(W)$ and deal with the converse bound after.}
\ml{There are two cases: }either $(x_j, j, c_j) \in W$, or not.  
If not, then all points of $W$ are in $V(i, j + 1)$ and $h(W) = I(i, j + 1)$.
But if $(x_j, j, c_j) \in W$, then no green point $(x_k, k, c_k)$ of $V(i, j)$ such that $x_k > x_j$ can be in $W$, since any such point dominates $(x_j, j, c_j)$.  Therefore, we may assume that every \emph{red} point $(x_k, k, c_k)$ of $V(i, j)$ with $x_k > x_j$ is in $W$.  
These points are $R(x_j, i, j)$, and so $W$ consists of $R(x_j, i, j)$ plus a maximum weight independent set of $V(x_j - 1, j + 1)$ of weight $I(x_j - 1, j + 1)$.  
\ml{In either case}, since $I(i, j)$ takes the maximum of the two possibilities, we get that $I(i, j) \geq h(W)$.
For the converse bound, it is not hard to see that there exists an independent set of weight $I(i, j + 1)$ in $V(i, j)$, and an independent set of weight $h(R(x_j, i, j)) + I(x_j - 1, j + 1)$ in $V(i, j)$ (the latter obtained by taking all red points in $R(x_j, i, j)$ and an optimal independent set in $V(x_j - 1, j + 1)$, noticing that none of the points from that set can dominate $R(x_j, i, j)$).  The weight of $W$ is at least the maximum of those, and so $I(i, j) \leq h(W)$.
This concludes the proof.
\end{proof}

Algorithm~\ref{alg:mainalgo} summarizes the above ideas and computes a maximum weight independent set in a bicolored plane domination graph in $O(n^2)$ time, as we argue below.

\begin{algorithm}
\DontPrintSemicolon
\SetKwProg{Fn}{function}{}{}
\Fn{abuseTraders($\T$)}{
    Compute the bicolored plane domination model $P$ for $G(\T)$ (Theorem~\ref{thm:characterize})\;
    Compute a model $P^*$ with the permutation matrix property (Lemma~\ref{lem:makepermut})\;
    $computeRweights(P^*)$\;
    \For{$j = n$ down to $1$}
    {
        \For{$i = 1$ to $n$}
        {
            Compute $I(i, j)$ using the recurrence\;
        }
    }
    return $I(1, n)$\;
}
\Fn{computeRweights($P^*$)}
  {
    \For{$j = 1$ to $n$}
    {
      Let $(x_j, j, c_j)$ be the unique point of $P^*$ with $y$-coordinate $j$\;
      $weight = 0$\;
      \For{$i = x_j$ to $n$}
      {
         Let $(i, y_i, c_i)$ be the unique point of $P^*$ with $x$-coordinate $i$\;
         \uIf{$c_i = red$ and $y_i > j$}
         {
            $weight = weight + h((i, y_i, c_i))$\;
         }
         $h(R(x_j, i, j)) = weight$\;
      }
    }
  }

  \caption{How to abuse traders: a geometric interpretation. }
  \label{alg:mainalgo}
\end{algorithm}
\vspace{3mm}

The algorithm simply computes $I(i, j)$ for every $j \in [n]$ in reverse order and for every $i \in [n]$ in order.
There are $O(n^2)$ values of $I(i, j)$ to compute and each can take time $O(1)$, assuming that each $h(R(x_j, i, j))$ can be accessed in constant time.  
This can be achieved by storing all the relevant $R$ values during a preprocessing step.  Notice that the only relevant values to store for the recurrence have the form $R(x_j, i, j)$, where $x_j$ is determined by $j$.  Hence, for each $x_j$ and $j$ pair, we can compute $h(R(x_j, i, j))$ for each $i$ in increasing order of $i$, adding for each $i$ the weight of the new point with $x$-coordinate $i$ if relevant.  For each $j$, we need linear time to compute the necessary $h(R(x_j, i, j))$ values, so this preprocessing is done in time $O(n^2)$.
All the above ideas are shown in Algorithm~\ref{alg:mainalgo}.
We note that this algorithm only returns the maximum weight, but a standard backtracking procedure can find an actual set of optimal winning trades, which in turn can be converted to a price movement.
We have argued the following.

\begin{theorem}
A maximum weight independent set of a bicolored plane domination graph can be found in time $O(n^2)$.
\end{theorem}

\subsection{Some open questions on trade conflict graphs}

If we abstract away the financial motivations of this work for a moment, trade conflict graphs (i.e. bicolored plane domination graphs) can be studied from a purely graph theoretical perspective.  One could ask whether such graphs are easy to recognize, and whether trade conflict graphs coincide, or at least share some similarity with a graph class that is already known.
Although we reserve these questions for future work, we can provide a partial answer to the latter.
A graph is \emph{chordal bipartite} if it is bipartite and contains no induced cycle with at least $6$ vertices.

\begin{proposition}
All bicolored plane domination graphs are chordal bipartite.  
\end{proposition}

\begin{proof}
We already know that $G$ is bipartite, so we only need to argue that it has no induced cycle of length $6$ or more.
Let $G$ be a bicolored plane domination graph and let $P$ be a set of colored points that is a model for $G$.
Assume that $G$ contains an induced cycle $r_1 g_1 r_2 g_2 \ldots r_k g_k r_1$ with $k \geq 3$.  Suppose without loss of generality that the $r_i$'s correspond to red points and the $g_i$'s to green points in $P$ (otherwise, start the cycle at $g_1$ and rename accordingly).  We will denote by $r_i.x$ and $r_i.y$ the coordinates of the point corresponding to $r_i$ in $P$, and use the same notation $g_i.x, g_i.y$ for $g_i$, where $i \in [k]$.

Observe that for any $i \in [k]$, it is not possible that $r_i.x \geq r_{i+1}.x$ and $r_i.y \geq r_{i+1}.y$ both hold (where $r_{k+1}$ is taken to be $r_1$).
Indeed, this would imply that any green point that dominates $r_i$ would also dominate $r_{i+1}$, and we know that $g_{i-1}$ dominates $r_i$ but not $r_{i+1}$ (note that this is where we need the cycle to be of length at least $6$).  
By a symmetric argument, we cannot have $r_{i+1}.x \geq r_i.x$ and $r_{i+1}.y \geq r_i.y$ simultaneously.

\begin{figure}
    \centering
    \includegraphics[width=0.4\textwidth]{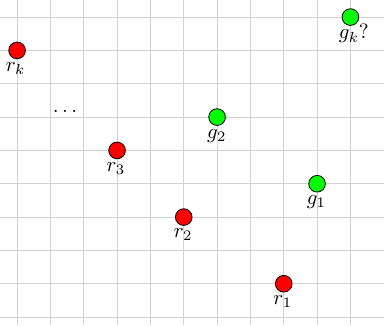}
    \caption{An illustration of how some of the relative locations of the $r_i$ and $g_i$ points are forced. }
    \label{fig:chordalbipartite}
\end{figure}

The rest of the following argument is illustrated in Figure~\ref{fig:chordalbipartite}.
Suppose without loss of generality that $r_2.x \leq r_1.x$ (otherwise, start the cycle at $r_2$ and list the vertices in reverse order by renaming accordingly).
We then deduce from the above observation that $r_2.y > r_1.y$.
Moreover, $g_2$ dominates $r_2$ but not $r_1$, which means that
$g_2.y \geq r_2.y > r_1.y$ but $g_2.x < r_1.x \leq g_1.x$.
Now consider the location of $r_3$, which is dominated by $g_2$.
We must have $r_3.x \leq g_2.x \leq g_1.x$ and, since $g_1$ does not dominate $r_3$, we must have $r_3.y > g_1.y \geq r_2.y$.
Again using our previous observation, we get that $r_3.x \leq r_2.x$.

In other words we have deduced from $r_2.x \leq r_1.x$ and $r_2.y > r_1.y$ that $r_3.x \leq r_2.x$ and $r_3.y > r_2.y$.
This argument can be used inductively to infer that 
$r_k.x \leq r_{k-1}.x \leq \ldots \leq r_1.x$ and $r_k.y > r_{k-1}.y > \ldots > r_1.y$.
Now consider the point $g_k$, which dominates both $r_k$ and $r_1$.  One can see that such a point must dominate each point in $r_1, r_2, \ldots, r_k$, a contradiction.
Thus the cycle cannot exist.
\end{proof}

Let us mention that chordal bipartite graphs are known to admit $O(|V| + |E|)$ time algorithms for the maximum \emph{unweighted} independent set problem, assuming that an ordering of the vertices called a \emph{strong ordering} is given~\cite{dragan2000strongly}.  However, it is not clear whether the same time bound can be achieved for the weighted version.  In any case, the geometric perspective may lead to even better algorithms in the future.
We close this section with some question that this leads us to.

\begin{enumerate}
    
    \item 
    Is the class of bicolored plane domination graphs equal to the class of chordal bipartite graphs?

    \item 
    If not, can bicolored plane domination graphs be characterized as bipartite graphs that forbid induced cycles of length at least $6$, plus a finite set of forbidden induced subgraphs?

    \item 
    Given a graph $G$, can one decide in polynomial time whether $G$ is a bicolored plane domination graph?
    
    \item 
    Given a bicolored plane domination model $P$ that has the permutation matrix property, can a maximum weight independent set be found in time $O(n)$ (regardless of the number of edges of the underlying graph)?
    
\end{enumerate}
 
A positive answer to (1) would be a purely accidental find, and would be surprising since bicolored  plane domination graphs appear to have a simple structure.  However, it seems hard to construct a chordal bipartite graphs that can be shown to not be a trade conflict graph.
This would probably give clues on (2) though.  
As for (3), trade conflict graphs are probably polynomial-time recognizable, given their simplicity, but this requires a more careful look.
As for (4), it is plausible that $o(n^2)$ could be achieved, since our recurrence needs to go through lots of $(i, j)$ locations that contain no point, and only $n$ such locations appear truly relevant.

\section{Online trades}\label{sec:online-trades}

We now allow the trader to interrupt the broker's price manipulation
at any time by adding a new trade or closing an open trade.  
One may ask whether, given this new power, the trader can make money.
It does appear that a small proportion of real-life traders makes profit, so perhaps the broker can be beaten.  On the other hand, if there was a strategy for the trader that guarantees profit 
even against conspiratorial prices, someone would have figured it out by now.
But surprisingly, there doesn't seem to be a clear answer in the literature.
One impossibility result is given by the \emph{efficient market hypothesis} (EMH)~\cite{fama1970efficient}, 
which, roughly speaking, states that if prices perfectly reflect their environment, then no strategy can
win \emph{consistently}, i.e. against every price movement (note that the author is far from being an economist and that the actual theory is much deeper).  
In the same vein, assuming the price is a random walk, the best trader strategy 
has expected profit $0$, so there exists a price movement with no win for the trader (unless the trader has infinite funds, in which case a Martingale betting system can be used for profit).  
Thus the broker can ``mimic'' a worst-case EMH-driven price
or a random walk to guarantee that, on expectation, traders make no money. 
The broker can still lose money if unlucky, raising the question of whether there is a strategy for the broker that guarantees profitability.  \ml{To our knowledge, none of aforementioned frameworks is applicable to our model.}

\subsection{The online model}

In the online setting, a trade can be opened at any current price, unlike the previous section where the trade opening price was always assumed to be $0$.  Moreover, a trader could decide to close a trade at any moment, not only when $w$ or $\l$ is reached.
To model this, an \emph{online trade} $T$ will be seen as a quadruple $(w, \l, p, f)$ where $w$ and $\l$ are the winning and losing prices, respectively, $p \in \mathbb{Z}$ is the opening price and $f : [\min(w, \l), \max(w, \l)] \rightarrow \mathbb{R}$ gives the profit the broker makes at any possible closing price.  We require that $f(p) = 0$, and that one of the following holds:

\begin{itemize}
    \item 
    $\l < p < w$, in which case 
    $f(q) > 0$ for each $q > p$ and $f(q) < 0$ for each $q < p$ ($up$); or
    
    \item 
    $w < p < \l$, in which case 
    $f(q) < 0$ for each $q > p$ and $f(q) > 0$ for each $q < p$ ($down$).
\end{itemize}

Note that in a real-life setting, the trader does not have to communicate the desired limit prices $w$ and $\l$.  However in our model, this would allow the trader to leave losing trades open forever.  Moreover, a limit on losses always exists, since brokers will never allow the trader's asset values to go below the account equity. 
In other words, the broker can always set a worst-case $w$ and $\l$ value if not given by the trader.

This online model can then be seen as a two-player game as follows. 
The game alternates turns between the trader and the broker, with the trader starting.  At any turn $i$, there is a current price $p_i$ and a set of open online trades $\T_i$.  On the first turn $i = 1$, the price is $p_1 = 0$ and $\T_1 = \emptyset$.

On the $i$-th turn, the trader acts first and can apply the following actions any number of times:

\begin{itemize}
    \item 
    add an online trade $T = (w, \l, p_i, f)$ to $\T_i$;
    
    \item 
    close any open trade $T = (w, \l, p, f)$ in $\T_i$.  This has the effect of removing $T$ from $\T_i$ and adding an amount of $f(p_i)$ to the broker's profit.

\end{itemize}

We say that the trader is \emph{passive} if the only action used by the trader is to add trades (never closing them unless $w$ or $\l$ is reached).
When the trader is done, the broker looks at $\T_i$ and $p_i$, and acts by choosing one of the two following actions:

\begin{itemize}
    \item 
    move the price up by one unit, so that the price for the next turn becomes $p_{i+1} = p_i + 1$;
    
    \item 
    move the price down by one unit, so that the price for the next turn becomes $p_{i+1} = p_i - 1$.

\end{itemize}

\ml{
Finally, we impose three restrictions on the broker:

\begin{enumerate}
    \item [R1.]
    if $\T_i \neq \emptyset$, the broker must eventually close a trade if the trader does nothing;
    
    \item [R2.]
    if $\T_i = \emptyset$ on the broker's turn, the broker moves towards price $0$;
    
    \item [R3.]
    the broker's decision is based solely on the current open trades $\T_i$ and the price $p_i$, and not on the past closed trades.

\end{enumerate}

These can be justified as follows.  R1 ensures constant activity, as it prevents the broker from zig-zagging in the same price area indefinitely.
R2 states that when no trade is active, the game can be reset and the current price $p_i$ might as well be interpreted as $0$.  R3 is akin to treating the broker as a Markov chain, as it makes sure that it always acts in the same manner when facing the same trade state.  Another way to justify this restriction is that the broker may assume optimal play from the trader, in which case the same move should always be preferred in a given configuration.  In real-life, traders are imperfect and brokers might profit from analyzing their past behavior, but this is beyond the scope of this work.
}

After the broker's turn, any trade $T = (w, \l, p, f)$ in $\T_i$ such that $p_{i+1} \in \{w, \l\}$ becomes closed.  In this case, $T$ is removed from $\T_i$ and a profit of $f(p_{i+1})$ is added to the broker's profit. The updated $\T_i$ becomes $\T_{i+1}$.
We also restrict $\T_i$ to be finite at any turn, meaning the trader must satisfy the requirement that there exists $t \in \mathbb{N}$ such that, regardless of the broker's strategy, for each turn $i$ we have $|\T_i| \leq t$.

It is not hard to show that if we impose no restriction on $f$, then the trader wins effortlessly.  For instance 
on turn $1$ at price $0$, the trader can simply open two trades $T_1 = (1, -1, 0, f_1)$ and $T_2 = (-1, 1, 0, f_2)$ such that $f_1(1) = f_2(-1) = -2$ and $f_1(-1) = f_2(1) = 1$.  Whichever direction the broker chooses, a profit of $-1$ will be incurred and both trades will be closed.  \ml{By R2, the trader can then wait until price $0$ is reached.  The game resets and by R3, the trader can repeat this forever, forcing a profit of $-\infty$.}

For this reason, we will require trades to be linear, as defined below.

\begin{definition}
An online trade $T = (w, \l, p, f)$ is \emph{linear} if there exists a positive rational number $\delta \in \mathbb{Q}_{> 0}$ such that:
\begin{itemize}
    \item 
    if $\l < p < w$, then $f(q) = \delta \cdot (q - p)$ for all $\l \leq q \leq w$;
    
    \item 
    if $w < p < \l$, then $f(q) = \delta \cdot (p - q)$ for all $w \leq q \leq \l$.
\end{itemize} 
\end{definition}

Note that real-life trading is done on linear trades and $\delta$ is sometimes called the \emph{lot size}.

\subsection{The only good online broker strategy}

Perhaps the first strategy that comes to mind is 
to use the results from the previous section in the online setting.  That is, on the broker's $i$-th turn, we look at $\T_i$, translate the trades so that the current price can be interpreted as $0$, and compute an optimal price trajectory $M$ using a maximum weight independent set on $G(\T_i)$.  We then go in the same direction as $M$ for one price unit, and this $M$ will be recalculated on the $(i+1)$-th turn.  Let us call this the \emph{offline strategy}.  Unfortunately, this is not a good strategy.

\begin{proposition}\label{prop:no-offline}
If the broker uses the offline strategy, then the trader can force the broker to have an infinite negative profit, even if the trader is passive and restricted to linear trades.
\end{proposition}

\begin{proof}
The trader's strategy is illustrated in Figure~\ref{fig:bad-offline}.
\begin{figure}[H]
    \centering
    \includegraphics[width=1\textwidth]{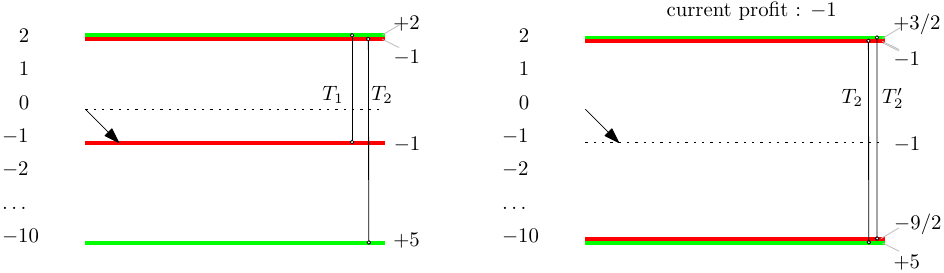}
    \caption{The offline strategy is not good.}
    \label{fig:bad-offline}
\end{figure}
On turn $1$ at price $p_1 = 0$, the trader opens a linear trade $T_1 = (2, -1, 0, f)$ with $f(-1) = -1, f(1) = 1, f(2) = 2$, and another trade $T_2 = (-10, 2, 0, g)$ such that $g(-10) = 5$ and $g(2) = -1$ (and the other values made linear with $\delta = 1/2$).
These trades are incompatible, so the broker using the offline strategy will choose to lose $T_1$ and win $T_2$, and will thus go down on its first turn.  The price will become $-1$ and the $T_1$ trade will be lost, incurring a (negative) profit of $-1$ for the \ml{broker}.   

On turn $2$ at price $p_2 = -1$, the passive trader  opens the trade $T_2' = (2, -10, -1, g')$ with $g'(-10) = -9/2$ and $g'(2) = 3/2$ (linear with $\delta = 1/2$).  At this point the trader only waits until $T_2$ and $T_2'$ are closed, \ml{which must happen by R1.}
If price $2$ is reached first, both trades close and the broker's profit for them is $-1 + 3/2 = 1/2$.  If price $-10$ is reached first instead, then the profit is $5 - 9/2 = 1/2$.  In either case, the broker's total profit is $-1 + 1/2 = -1/2$.  \ml{At this point, all trades are closed.  By R2 and R3, the trader can repeat this pattern indefinitely.}
\end{proof}

The above is actually part of a more general result that says that any strategy that does not go for the maximum \emph{potential} from the start loses infinitely.  The interest of Proposition~\ref{prop:no-offline} is that it serves as a concrete example of losing to a concrete strategy.
We next show that in fact, maximizing potential profits is essentially the only good strategy.  Anything that diverges from it is either suboptimal or has infinite negative profit.

For a set of online trades $\T$ and a price $p$, the \emph{potential profit} of $p$ on $\T$ is defined as
\[
potent(\T, p) = \sum_{(w, \l, q, f) \in \T} f(p)
\]
which represent the profit that would be made if every trade closed at price $p$.

The \emph{maximum potential} strategy is defined as follows.  On the broker's $i$-th turn, 

\vspace{3mm}

\noindent 
-- if $potent(\T_i, p_i + 1) \geq potent(\T_i, p_i - 1)$, then move the price $up$;\\
-- otherwise if $potent(\T_i, p_i + 1) < potent(\T_i, p_i - 1)$, then move the price $down$.
\vspace{3mm}

\ml{Intuitively speaking, this strategy is based on the fact that the trader either has more buys than sells or vice-versa.  Under linear trades, one of the direction has to be worse than the other (unless they both achieve profit $0$) and we simply go in the direction that is worse for the trader.  Do note that this strategy can change direction even if the trader does nothing, since winning a trade can change the direction of maximum potential.}

We now analyze this strategy against others, but from a general perspective where the broker starts applying it after a certain number of turns $i$.  

\begin{theorem}
Suppose that the trader is restricted to linear online trades.  Let $\T_i$ be a set of open trades at the start of the $i$-th turn, let $p_i$ be the current price, and let $profit(i)$ be the total profit of the broker at the start of turn $i$.  Then the following holds:

\begin{enumerate}
    \item 
    if $profit(i) + potent(\T_i, p_i) \geq 0$, then if the broker applies the maximum potential strategy from turn $i$ and onwards, it achieves a total profit of at least $profit(i) + potent(\T_i, p_i)$ against any trader.  Moreover,  this is the maximum possible profit achieved against a trader with optimal play, passive or not;
    
    \item 
    if the broker does not move the price in the direction of maximum potential profit on turn $i$, then the broker makes a profit that is strictly less than $profit(i) + potent(\T_i, p_i)$ against a trader with optimal play, passive or not;
    
    \item 
    if $profit(i) + potent(\T_i, p_i) < 0$, then the broker incurs an infinite negative profit against a trader with optimal play, passive or not.

\end{enumerate}
 
\end{theorem}

\begin{proof}
Let us first consider the first part of (1).  
Let $\T'_i$ be the set of trades after the trader has finished the $i$-th turn, and let $profit'(i)$ be the profit of the broker at this point.  Note that if the trader is passive, then $\T_i \subseteq \T'_i$, but if it is not passive, a certain set of trades might be closed.  In this case, for each $(w, \l, q, f) \in \T_i \setminus \T_i'$, a profit of $f(p_i)$ is added for the broker and an amount of $f(p_i)$ is removed from the potential of $p_i$.
Also, recall that for each new trade $(w, \l, q, f) \in \T'_i \setminus \T_i$, we have $f(p_i) = 0$.
It follows that 
\[
profit(i) + potent(\T_i, p_i) = profit'(i) + potent(\T'_i, p_i)
\]

In other words, the trader's turn does not affect the potential profit of the broker.
We next argue that one of the price directions does not decrease this potential.  
Write $\T'_i = \{T_1, \ldots, T_m\}$ and for $j \in [m]$, let $\delta_j$ be the linear factor affecting trade $T_j$.  We say that $\sign{T_j} = 1$ if $T_j$ is an $up$ trade and $\sign{T_j} = -1$ if $T_j$ is a $down$ trade.
Then by linearity, raising the price by one unit changes the potential of $T_j$ by $\sign{T_j} \cdot \delta_j$ ($up$ trades increase in potential, and $down$ trades decrease), and lowering the price by one unit changes it by $-\sign{T_j} \cdot \delta_j$.  In other words, 
\[
potent(\T'_i, p_i + 1) - potent(\T'_i, p_i) = \sum_{(w, \l, p_j, f_j) \in \T'_i} (f_j(p_i + 1) - f_j(p_i)) = \sum_{T_j \in \T'_i} \sign{T_j} \cdot \delta_j
\]
and likewise
\[
potent(\T'_i, p_i - 1) - potent(\T'_i, p_i) =  \sum_{T_j \in \T'_i} -\sign{T_j} \cdot \delta_j
\]
One of $\sum_{T_j \in \T'_i} \sign{T_j} \cdot \delta_j$ or $\sum_{T_j \in \T'_i} -\sign{T_j} \cdot \delta_j$
is greater than or equal to $0$, and so one direction leads to a potential greater than or equal to $potent(\T'_i, p_i)$.
Suppose that this direction is up.  Let $profit(i + 1)$ be the broker's profit after changing the price to $p_{i+1} = p_i + 1$ and let $\T_{i+1}$ be the set of trades still open at this point.  Some trades $(w, \l, q, f)$ might become closed at price $p_i + 1$ and incur a profit of $f(p_i + 1)$, but remove an amount of $f(p_i + 1)$ from $potent(\T'_i, p_i + 1)$.  That is, similarly as we did above, we obtain that
\[
profit(i + 1) + potent(\T_{i+1}, p_i + 1) = profit'(i) + potent(\T'_i, p_i + 1)
\]
which is greater than or equal to $profit(i) + potent(\T_i, p_i)$.  The idea is the same if going down is better for the potential instead.
By repeating this argument, this shows that the profit plus potential can be made to never decrease, guaranteeing the broker at least a profit of $profit(i) + potent(\T_i, p_i)$ when every trade is closed and the potential has reached $0$.
Also note that it is not hard to see that unless the trader opens new trades, the broker will keep the same direction until at least one trade is closed, as required by our model. This shows the first part of (1).

We now argue that this is what the optimal trader can achieve.
If the trader is not passive, then the trader can close every trade at price $p_i$ at the start of the $i$-th turn and stop playing.  The broker's profit will be exactly $profit(i) + potent(\T_i, p_i)$, which is the best the trader can hope for since this is a lower bound.  This proves (1) in the case of a non-passive trader.

We next prove a claim that will be useful for the remaining statements of the theorem that concern passive traders. 
Let $\T_j$ be any set of open trades at some current price $p_j$, and assume that it is the trader's turn. We claim that the passive trader can add a set of trades to $\T_j$ such that the broker's profit on $\T_j$ is forced to be exactly
$potent(\T_j, p_j)$.  This effectively simulates the action of closing all trades at current price.
Let $\T_j^+ = \{(w, \l, p, f) \in \T : w > p_j\}$
and let $\T_j^- = \T_j \setminus \T_j^+$.
Consider $T = (w, \l, p, f) \in \T_j^+$ and let $\delta$ be the linear factor affecting $T$.  
Add the corresponding linear trade $T' = (\l, w, p_j, f')$ such that \ml{$f'(\l) = \delta (p_j - \l)$ and $f'(w) = \delta (p_j - w)$}.
Suppose that price $q \in \{w, \l\}$ is reached first. Since the trades are linear with the same $\delta$ the profit on $T$ and $T'$ will be $f(q) + f'(q) = \delta(q - p) + \delta(p_j - q) = \delta (p_j - p)$.  Again by linearity, this is exactly $f(p_j)$.  
By symmetry, one can add a similar trade $T'$ for each $(w, \l, p, f) \in \T^-$ to force a profit of $f(p_j)$ for $T$ and $T'$.  \ml{Because of restriction R1, the trader can wait until the broker has closed every trade.}
It follows that the trader can force the current trades to close for a total profit of $\sum_{(w, \l, p, f) \in \T_j} f(p_j) = potent(\T_j, p_j)$.

This claim shows that (1) holds also for passive traders, since the trader can force a final profit of $profit(i) + potent(\T_i, p_i)$ at the start of the $i$-th turn.

Statement (2) is now easy to prove.  If the broker moves against the optimal potential, then we will get $profit(i + 1) + potent(\T_{i+1}, p_{i+1}) < profit(i) + potent(\T_i, p_i)$.
As we have seen, the trader (passive or not) can force a final profit of $profit(i + 1) + potent(\T_{i+1}, p_{i+1})$, which is strictly less than what the maximum potential strategy gives.

As for statement (3), suppose that the trader has done a sequence of $i - 1$ actions such that $profit(i) + potent(\T_i, p_i) < 0$.  Again, the trader, passive or not, can force this negative profit on the broker.  After the latter has closed every trade, the trader can simply repeat the same sequence of actions to make the broker lose the same amount again indefinitely (which works because of \ml{restrictions R2 and R3, which imply that the game always resets after this negative profit, and that the broker repeats the same mistake every time)}.
\end{proof}

As a corollary, we see that if the trader makes a move that makes the profit plus potential positive, then the broker will make profit.  Conversely, if the broker makes a mistake to make this value negative, then the trader will make it lose infinitely. 
Therefore, the only source of positive profit is due to a trader mistake, and the only source of negative profit is due to a broker mistake.  
If both players play optimally, no mistake is ever made and a profit of $0$ is made. 
However, in real life trading, each trade opened by the trader has a fee that goes into the broker's pockets.  We conclude that the best option for the trader is to not trade and try to get rich by other means.

\section{Conclusion}

In this work, we have provided some useful algorithmic tools for malevolent brokers to plot against traders.  
Our models are combinatorial and do not consider any form of stochasticity.  This may serve as criticism towards our model, but one must reckon that stochastic trading models have been studied for a very long time.  
Our work offers a new perspective on trade analysis and shows how price movements can be optimized when given full control over them.  Let us also mention that our graph theoretical view on trading is novel and allowed us to discover a potentially new graph class (although they might just be chordal bipartite graphs --- the future will tell).  

One future direction would be to reach a middle ground between combinatorial and stochastic by incorporating various forms of randomness into our price manipulation possibilities.  For instance, we may be able to move prices in the desired direction only with certain probabilities.  Also, in the online model, we often assumed a trader with optimal play, whereas a randomized trader could be considered more realistic.  
Finally, one can ask whether a trader knowledgeable of the broker's scheme is able to profit from that information.  In the offline setting, if the trader was able to see every trade, he or she could predict the broker's movements and add appropriate winning trades.  This would need to be done so that the broker would not alter its trajectory because of the new trades, leading to another algorithmic problem of interest.

\bibliographystyle{plainurl}

\bibliography{forex}

\end{document}